\documentclass[12pt]{amsart}
\usepackage[left=15mm, right=15mm, top=10mm, bottom=15mm]{geometry}

\usepackage[english]{babel}
\usepackage[T1]{fontenc}
\usepackage[utf8]{inputenc}

\theoremstyle{plain}
    \newtheorem{theorem}{Theorem}[section]
    \newtheorem{lemma}[theorem]{Lemma}

\theoremstyle{definition}

    \newtheorem{example}[theorem]{Example}

\usepackage[
    draft = false,
    unicode = true,
    colorlinks = true,
    allcolors = blue,
    hyperfootnotes = true,
    citecolor = red
]{hyperref}
\usepackage{amsfonts,amsmath,amssymb,amscd,amsthm,url}
\usepackage[inline]{enumitem}
\usepackage{graphicx,epsf,afterpage, wrapfig}
\usepackage{soul}
\usepackage{multicol}
\usepackage{array}
\usepackage{braket}
\usepackage{epigraph}
\usepackage{rotating, floatflt}
\usepackage{xstring} % Пакет для надежной работы со строками

\usepackage[
backend=biber,
style=numeric-comp,
sorting=none,
giveninits=true
]{biblatex}
\usepackage{csquotes}

\usepackage{xcolor}

\usepackage{ulem}

\usepackage{tikz}
\usetikzlibrary{positioning, calc, intersections, through, arrows, matrix, chains, math}
\usetikzlibrary{arrows.meta}
\usetikzlibrary{shadows}
\usetikzlibrary{graphs}
\usetikzlibrary{graphs.standard}
\usetikzlibrary{patterns}
\usetikzlibrary{decorations.pathreplacing,calligraphy,backgrounds}
\DeclareFieldFormat{usera}{\href{https://arxiv.org/abs/#1}{\texttt{arXiv:#1}}}

\renewbibmacro*{finentry}{\printfield{usera}\newunit\finentry}

\renewbibmacro{in:}{%
  \iffieldundef{usera}
    {\printtext{\bibstring{in}\intitlepunct}}
    {}%
}

\newcommand\norm[1]{\ensuremath{\left\lVert#1\right\rVert}}
\newcommand\abs[1]{\ensuremath{\left\lvert#1\right\rvert}}

\DeclareMathOperator{\id}{id}

\DeclareMathOperator{\interior}{int}

\newcommand{\Hcal}{\mathcal{H}}

\newcommand{\Ucal}{\mathcal{U}}

\newcommand{\Abf}{{\ensuremath{\mathbf{A}}}}

\newcommand{\Fbf}{{\ensuremath{\mathbf{F}}}}
\newcommand{\Gbf}{{\ensuremath{\mathbf{G}}}}
\newcommand{\Hbf}{{\ensuremath{\mathbf{H}}}}
\newcommand{\Ibf}{{\ensuremath{\mathbf{I}}}}

\newcommand{\Ubf}{{\ensuremath{\mathbf{U}}}}

\newcommand{\xbf}{{\ensuremath{\mathbf{x}}}}

\newcommand{\zero}{{\ensuremath{\mathbf{0}}}}

\newcommand{\ufrak}{{\ensuremath{\mathfrak{u}}}}

\newcommand{\R}{\ensuremath{\mathbb{R}}}

\newcommand{\Cx}{\ensuremath{\mathbb{C}}}

\newcommand{\N}{\ensuremath{\mathbb{N}}}

\newcommand{\seg}[1]{\ensuremath{[\,#1\,]}}

\renewcommand{\geq}{\geqslant}

\renewcommand{\leq}{\leqslant}

\newcounter{mcnt}

\newcounter{wordcnt}

\begin{document}

\title{Quantum Kolmogorov--Arnold representation theorem \\ for continuous unitary-valued maps}

\author{Sviatoslav V. Dzhenzher}

\begin{abstract}
 The classical Kolmogorov--Arnold representation theorem states that any continuous multivariate function can be exactly decomposed into a finite composition of univariate continuous functions and addition operations.
 This foundational result has recently inspired the development of Kolmogorov--Arnold Networks (KANs) in classical machine learning, as well as their extensions into the quantum domain (QKANs). In this paper, we establish two quantum analogues of the Kolmogorov--Arnold representation theorem for continuous unitary-valued maps of several variables within an open $1$-neighbourhood of the identity matrix \(O_1(\mathbf{I}) \subset \mathcal{U}(n)\).
 First, we prove a representation theorem that yields an exact additive decomposition inside the matrix exponent of anti-Hermitian-valued maps.
 Second, due to the non-commutative nature of quantum operators, we derive a factorised version expressing the target unitary map as a finite sequential product of univariate matrix exponentials. Finally, we provide a concrete topological counterexample based on the lifting property of \(\mathcal{SU}(2)\) to demonstrate that these local representation theorems cannot be globally extended to the entire unitary group \(\mathcal{U}(n)\) without encountering fundamental structural obstructions.
\end{abstract}

\thanks{\hspace{-4mm}
S.\,V. Dzhenzher: sdjenjer@yandex.ru. orcid: 0009-0008-3513-4312
\\
% V.\,Zh. Sakbaev: fumi2003@mail.ru. orcid: 0000-0001-8349-1738
% \\
% V.\,Zh. Sakbaev: Keldysh Institute of Applied Mathematics of Russian Academy of Sciences 125047, Miusskaya pl. 4, Moscow, Russia
% \\
% All authors:
Moscow Institute of Physics and Technology 141701, Institutskii per. 9, Dolgoprudny, Russia}

\maketitle
\thispagestyle{empty}

\noindent \emph{Keywords:} Quantum Kolmogorov--Arnold representation theorem, unitary-valued maps, Lie algebras, quantum machine learning, Kolmogorov--Arnold Networks (KAN), factorisation of unitary evolutions, lifting property.

\vspace{3mm}
\noindent \emph{MSC 2020:} 
Primary: 26B40, %(Real functions of several variables: Representations with sums or products)
41A63; % Multidimensional approximation
Secondary: 22E70, % (Applications of Lie groups to the sciences; explicit representations)
15A16, % (Matrix exponential and other matrix functions),
81P68, % (Quantum computation, quantum information),
68T07. % (Artificial neural networks and deep learning)

% 13P11. % Relations of commutative algebra with algebraic geometry
% 68T07  % Statistical learning theory (including deep learning, artificial neural networks)

\section{Introduction}\label{s:intro}

This line of research originates from Hilbert's 13th problem \cite{wiki-Hilb-13-problem, Morris-2021-hilbert13}.
In 1836, Hamilton showed \cite{Hamilton1836} that any seventh-degree equation can be reduced to an equation of the form
\[
    x^7 + ax^3 + bx^2 + cx + 1 = 0.
\]
Hilbert asked whether the solution \(x=x(a,b,c)\) of this equation, considered as the function of three variables, can be expressed as the superposition of functions of two variables.
Originally, Hilbert posed the question for algebraic functions, but he subsequently extended the conjecture to continuous functions as well.
While the continuous variant was later solved affirmatively, the original algebraic formulation exhibits fundamentally different behaviour due to specific algebraic and topological obstructions (see, for example, an overview in 1976 by Arnold and Shimura \cite{Arnold-Shimura-1976-superposition} on the superposition of algebraic functions). 
The continuous problem was given a positive solution in 1956--57 by Kolmogorov and Arnold \cite{Kolmogorov56, Arnold57, Kolmogorov57}.
In a series of works, they proved the result now known as the Kolmogorov--Arnold (KA) representation theorem (see Theorem~\ref{t:ka}; in computer science, it is sometimes referred to as the Kolmogorov Superposition Theorem).
This theorem states that any continuous multivariate function can be exactly decomposed into a finite composition of univariate continuous functions and addition operations.
Thus, the KA theorem completely settled Hilbert's 13th problem in a significantly more general form.

Despite its theoretical beauty, the classical KA theorem was long criticised for its non-constructive nature and the low smoothness of the resulting inner and outer functions. Specifically, Vitushkin \cite{Vitushkin-1954} and Fridman \cite{Fridman} demonstrated strict limitations regarding the differentiability of these representations, which initially hindered their integration into numerical analysis.
At first, the KA representation theorem had many developments and implications, mostly theoretical.
Ostrand \cite{Ostrand65} generalised it to functions on finite-dimensional compact spaces, and Sternfeld \cite{Sternfeld85, Sternfeld89} proved the lower bound on the dimension of these compact sets. These results bound the number of univariate functions in KA representation; see also \cite{Dzhenzher24} for an overview.
However, the recent computational paradigm shift has completely transformed this perspective, demonstrating that the structural core of the theorem provides an exceptionally powerful framework for function approximation without the curse of dimensionality.
Nearly three decades passed before it was recognised \cite{HechtNielsen87, Kurkova91} that the KA theorem also provides a significant contribution to the topic of neural networks (NNs).
Hecht-Nielsen showed that KA helps to compress any computable function into a three-layer feedforward neural network.
This line of thought regarding bounded-width architectures was further pushed in 1999 by Maiorov and Pinkus \cite{Maiorov-Pinkus}, who demonstrated that an MLP with just two hidden layers and a fixed, bounded number of units can act as a universal approximator, provided a specific analytic sigmoidal activation function is chosen.

These developments opened the door to the practical applications of KA.
The approach was revitalised completely in 2024 by Kolmogorov--Arnold Networks (KANs) \cite{KAN}, which replace fixed activation functions with learnable spline-based edge activations, offering a more interpretable alternative to MLPs.
This framework was further extended in \cite{KAN2} into scientific discovery and in \cite{CKAN} into grid-structured data.
The progression of these ideas to the quantum domain has led to Quantum Kolmogorov--Arnold Networks (QKANs).
For example:
\begin{itemize}    
    \item \cite{QKAN-variation} introduces trainable quantum circuits with data re-uploading to represent edge activations,
    % There, the unitary evolutions are parameterised as in our paper.
    
    \item \cite{QKAN-enh-var} uses a tiling technique for improved noise robustness on Noisy Intermediate-Scale Quantum (NISQ) devices,

    \item \cite{Werner-2025} adapts classical KAN architectures to quantum forms using a Quantum Circuit Born Machine (QCBM),

    \item \cite{QKAN-Ivashkov-etal} utilises block-encodings and Quantum Singular Value Transformation for rigorous quantum linear algebra, enabling efficient state preparation.

    \item \cite{QKAN-china} leverages parameterised quantum circuits to enable learnable, basis-free activation functions for improved function approximation in quantum machine learning.
\end{itemize}

Beyond network architectures, KANs have proven effective in physics applications.
Shamim et al. \cite{QKAN-spins} employed KAN-based wavefunctions for studying quantum spin systems with higher efficiency than traditional neural networks.
Sen et al. \cite{QKAN-time-series} introduced physics-informed time series analysis using KANs with Ehrenfest constraints.
Furthermore, these architectures have been successfully merged \cite{QCPIKAN} with quantum computing to solve partial differential equations, as demonstrated by Quantum-Classical Physics-Informed Kolmogorov-Arnold Networks (QCPIKAN).
Finally, in \cite{Dzhenzher26-alg-QKA}, a theoretical foundation was provided by establishing an algebraic Kolmogorov--Arnold representation theorem for quantum measurements, showing that unentangled state estimation can be decomposed through local observables.
There, questions arose regarding the establishment of the Quantum Kolmogorov--Arnold (QKA) representation theorem for unitary evolutions, or, more generally, for quantum channels.
The extension of KANs to the quantum domain promises to bridge the gap between classical structural representations and quantum circuit architectures. Yet, a rigorous mathematical framework for representing continuous unitary-valued maps via parameterised paths in Lie groups remains underdeveloped. Motivated by this gap, this paper establishes a formal bridge between classical superposition theory and unitary evolutions. By mapping multivariate parameters into the Lie algebra \(\mathfrak{u}(n)\), we present exact decomposition strategies that provide a solid theoretical foundation for designing structurally sound QKANs.

In this paper, we address those questions from \cite{Dzhenzher26-alg-QKA} while considering the continuous unitary-valued maps of several variables.
Specifically, we consider unitary evolutions parameterised by \(\xbf\in\seg{0,1}^d\);
this approach is close to \cite{QKAN-variation, QKAN-enh-var, QKAN-Ivashkov-etal, QKAN-china, QCPIKAN}.
Instead of sums of functions, we consider compositions of matrix exponentials of anti-Hermitian-valued maps.
Since the operators may not commute, we consider two variants of QKA: for the exponent of the sum of operators (Theorem~\ref{t:qka-repr}), and for the factorised exponentials (Theorem~\ref{t:qka-fact}).
Note that even in classical structures, one can introduce non-commutativity via the Poisson bracket \cite{Brunton-2022, Morgan-2020}.

The structure of the article is as follows.
Section~\ref{s:main} is devoted to the necessary background and main results (Theorem~\ref{t:qka-repr} and Theorem~\ref{t:qka-fact}).
In Section~\ref{s:proof-repr}, we prove Theorem~\ref{t:qka-repr}.
In Section~\ref{s:proof-fact}, we prove Theorem~\ref{t:qka-fact}.
Section~\ref{s:conclusion} is devoted to the conclusions and discussions of the results.

\section{Background and main results}\label{s:main}

By \(\xbf\in X^d\) we denote a $d$-dimensional tuple \(\xbf=(x_1,\ldots,x_d)\) with \(x_j\in X\).

Let us recall the Kolmogorov--Arnold representation theorem \cite{Kolmogorov56, Arnold57, Kolmogorov57}.
The proof (and the formulation) has been refined over the years, for example, in \cite{Sprecher1965, Kahane, Lorentz-Golitschek-Makovoz}.
We are using the modified version of this theorem from \cite{Hedberg-appx, Brattka2007}.

\begin{theorem}[Kolmogorov--Arnold representation; 1956--57]\label{t:ka}
    Let $d > 1$ be an integer.
    Then there exist $2d+1$ continuous ``inner'' functions \(\phi_1,\ldots,\phi_{2d+1}\colon\seg{0,1}^d\to\R\) of the form
    \[
        \phi_j(\xbf) = \sum_{k=1}^d \phi_{j,k}(x_k),
    \]
    such that for any continuous ``target'' function \(f\colon\seg{0,1}^d\to\R\), there exists a uniformly continuous ``outer'' function \(g\colon\R\to\R\), such that for any \(\xbf\in\seg{0,1}^d\),
    \[
        f(\xbf) = \sum_{j=1}^{2d+1} g(\phi_j(\xbf)).
    \]
\end{theorem}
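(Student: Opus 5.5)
The plan is to follow the Lorentz--Sprecher--Kahane approach: build the inner functions once by a Baire category argument, and then obtain the single outer function \(g\) by an iterative approximation scheme. Fix rationally independent positive numbers \(\lambda_1,\ldots,\lambda_d\), say with \(\sum_k\lambda_k\leq 1\), and look for \(\phi_{j,k}=\lambda_k\psi_j\). Here \(\psi_1,\ldots,\psi_{2d+1}\) are continuous increasing functions \(\seg{0,1}\to\seg{0,1}\). The heart of the matter is the following ``separation'' property of the tuple \((\psi_j)\).

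\emph{Separation property.} For every \(\varepsilon>0\) there is \(\delta>0\) and a family of cubes with the following features. Consider \(2d+1\) shifted grids of closed cubes of side \(\delta\), separated by small gaps. The \(j\)-th grid is shifted by \(j\delta/(2d+1)\) in each coordinate. Then:
\begin{enumerate}[label=(\roman*)]
\item every \(\xbf\in\seg{0,1}^d\) lies in cubes of at least \(d+1\) of the \(2d+1\) grids, by a pigeonhole count over coordinates;
\item for each \(j\), the images \(\phi_j(Q)\) of the cubes \(Q\) of the \(j\)-th grid are pairwise disjoint intervals;
\item every such cube has diameter less than \(\varepsilon\).
\end{enumerate}
I would prove that the set of tuples \((\psi_j)\) in the complete metric space of increasing continuous functions enjoying this property for a fixed \(\varepsilon=1/m\) is open and dense. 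Then the intersection over \(m\) is nonempty by Baire's theorem. Density is the main obstacle. The plan there is to perturb \(\psi_j\) into a step-like function that is locally constant on the gap-separated intervals. Rational independence of the \(\lambda_k\) makes the finitely many values \(\sum_k\lambda_k\psi_j(\text{interval}_k)\) distinct. A slight further perturbation then restores strict monotonicity while keeping the images disjoint.

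Given such inner functions, the outer function is constructed as follows. Set \(\theta=(d+1)/(2d+1)\) and pick \(\theta<\theta'<1\). The key lemma is: for any continuous \(h\) there is a continuous \(g_h\) with \(\norm{g_h}_\infty\leq\norm{h}_\infty/(d+1)\) and
\[
\Bigl\|h-\sum_{j}g_h\circ\phi_j\Bigr\|_\infty\leq\theta'\norm{h}_\infty.
\]
To prove it, choose \(\varepsilon\) so that \(h\) oscillates by less than \((\theta'-\theta)\norm{h}/(2d+1)\) on sets of diameter \(\varepsilon\). On each interval \(\phi_j(Q)\) let \(g_h\) equal \(h(\text{centre of }Q)/(d+1)\). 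These intervals are disjoint by (ii), and \(g_h\) is extended linearly in between. At each point \(\xbf\), at least \(d+1\) of the terms reproduce \(h(\xbf)/(d+1)\) up to the oscillation, by (i) and (iii). The remaining at most \(d\) terms have size at most \(\norm{h}/(d+1)\). This gives the bound.

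Finally, iterate. Put \(h_0=f\) and \(h_{r+1}=h_r-\sum_j g_{h_r}\circ\phi_j\). Then \(\norm{h_r}\leq(\theta')^r\norm{f}\), so \(g=\sum_r g_{h_r}\) converges uniformly. It represents \(f\) exactly, because \(f-\sum_j g\circ\phi_j=\lim_r h_r=0\). We may take each \(g_{h_r}\) to be uniformly continuous on \(\R\), by extending it constantly outside \(\seg{0,1}\). Then \(g\) is a uniform limit of uniformly continuous functions, hence uniformly continuous.
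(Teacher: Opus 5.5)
The paper gives no proof of this statement. It quotes the theorem from the literature, in the form of Hedberg and Brattka. Your plan is essentially the Kahane--Lorentz argument from the cited references, and its architecture is right: Baire category for the inner functions, a one-step contraction lemma, then geometric iteration. As written, however, there is a genuine gap where the single outer function is built.

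Property (ii) only requires that, for each \emph{fixed} \(j\), the intervals \(\phi_j(Q)\) with \(Q\) in the \(j\)-th grid be pairwise disjoint. In the key lemma you define one function \(g_h\) on the union of these intervals over \emph{all} \(j\), prescribing the value \(h(c_Q)/(d+1)\) on \(\phi_j(Q)\). Nothing in (ii) prevents \(\phi_1(Q)\) from meeting \(\phi_2(Q')\) for cubes \(Q,Q'\) that are far apart, where \(h\) takes very different values. For instance, give \(\psi_1\) and \(\psi_2\) the same rational plateau values, shifted by a few indices. Then each \(\phi_j\) separates its own grid, yet a grid-\(1\) cube and a distant grid-\(2\) cube share an image point, so \(g_h\) is not well defined.

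Disjointness across different \(j\) is exactly what makes a \emph{single} outer function possible. So (ii) must demand that all image intervals \(\phi_j(Q)\), over all \(j\) and all cubes \(Q\) of the \(j\)-th grid, be pairwise disjoint. Your density argument delivers this at no extra cost. Choose the rational plateau values \(c^{(j)}_i\) pairwise distinct over all pairs \((j,i)\), not merely increasing in \(i\). Then \(\sum_k\lambda_k c^{(j)}_{i_k}=\sum_k\lambda_k c^{(j')}_{i'_k}\) forces \(c^{(j)}_{i_k}=c^{(j')}_{i'_k}\) for every \(k\), by rational independence. Hence \(j=j'\) and \(i_k=i'_k\), and openness is unaffected.

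There are two smaller corrections.

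\emph{The contraction constant is wrong for \(d\geq 2\).} Your own count picks \(d+1\) good terms and bounds the other \(d\) by \(\norm{h}_\infty/(d+1)\). That gives \(\frac{d}{d+1}\norm{h}_\infty\) plus the oscillation term, and this cannot be improved for your \(g_h\). At a point lying in cubes of all \(2d+1\) grids, near a maximum of \(\abs{h}\), the sum is close to \(\frac{2d+1}{d+1}h(\xbf)\). This leaves an error of about \(\frac{d}{d+1}\norm{h}_\infty\). Since \(\frac{d+1}{2d+1}<\frac{d}{d+1}\) for \(d\geq2\), your \(g_h\) fails the stated bound for \(\theta'\) close to \(\theta\). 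Simply take \(\theta=\frac{d}{d+1}\); the iteration goes through unchanged.

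\emph{The function space is not complete.} Strictly increasing functions do not form a complete metric space. Work instead with continuous nondecreasing maps \(\seg{0,1}\to\seg{0,1}\), which form a closed subset of the continuous functions. The step-like functions then already lie in the space and their cube images are distinct points. The final perturbation restoring strict monotonicity becomes unnecessary.
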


In 1965, Sprecher showed \cite{Sprecher1965} that \(\phi_{j,k}\) in Theorem~\ref{t:ka} can be chosen of the form \(\phi_{j,k}(x) = \phi_{j,1}(x)\gamma_k\), where \(\gamma_1,\ldots,\gamma_d\) are rationally independent.
For example, \(\gamma_1,\ldots,\gamma_d\) can be the square roots of pairwise distinct prime numbers.
Vitushkin \cite{Vitushkin-1954} showed that the theorem is false if we require all functions \(f,g,\phi_j\) to be continuously differentiable; in fact, the set of tuples \((\phi_j)\) corresponding to a function $f$ becomes nowhere dense, which spoils the application of the Baire category theorem, firstly introduced by Kahane \cite{Kahane}.
Fridman \cite{Fridman} showed that the ``inner'' functions \(\phi_j\) can be chosen $1$-Lipschitz.

We identify bounded linear operators on \(\Cx^n\) with complex \(n\times n\) matrices.
Denote by \(\Ucal(n)\) the Lie group of unitary \(n\times n\)-matrices.
Denote by \(\ufrak(n)\) the corresponding Lie algebra of anti-Hermitian \(n\times n\)-matrices.
Let us denote by
\[
    O_1(\Ibf) := \left\{\Ubf\in\Ucal(n) \;:\; \norm{\Ubf-\Ibf}<1\right\} \subset\Ucal(n)
\]
the open $1$-neighbourhood of the identity matrix $\Ibf$.
The definition of \(O_1(\Ibf)\) depends on the choice of the matrix norm, but our results remain independent of it.
The choice of norm affects only the constants, with its sole requirement being
\[
    O_1(\Ibf) \subsetneq\Ucal(n).
\]

\begin{theorem}[Unitary representation KA]\label{t:qka-repr}
    Let $n\geq 1$ be an integer, and let \(d>1\) be the dimension of the parameter space.
    Then there exist \(m=n^2(2d+1)\) fixed ``inner'' anti-Hermitian operators \(\Hbf_1,\ldots,\Hbf_m\in\ufrak(n)\) and fixed ``inner'' continuous functions \(\phi_1,\ldots,\phi_m\colon [\,0,1\,]^{d}\to \R\) of the form
    \[
        \phi_j(\xbf) = \sum_{k=1}^d \phi_{j,k}(x_k),
    \]
    such that for any ``target'' continuous unitary-valued map \(\Ubf\colon[\,0,1\,]^d\to O_1(\Ibf)\subset\Ucal({n})\), there exist ``outer'' uniformly continuous functions $g_1,\ldots,g_m\colon\R\to\R$ such that
    \[
        \Ubf(\xbf) = e^{\sum_{j=1}^mg_j(\phi_j(\xbf))\Hbf_j}.
    \]
    Moreover, the expression in the exponent can be represented in the form
    \[
        \sum_{k=1}^{n^2}\sum_{s=1}^{2d+1}g_k(\phi_s(\xbf))\Hbf_k.
    \]
\end{theorem}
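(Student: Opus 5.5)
We plan to reduce the statement to the classical Theorem~\ref{t:ka} by taking a continuous logarithm. On $O_1(\Ibf)$ the principal matrix logarithm is given by the convergent series
\[
    \log \Ubf = \sum_{r\geq 1}\frac{(-1)^{r+1}}{r}(\Ubf-\Ibf)^r .
\]
This requires $\norm{\Ubf-\Ibf}<1$ in a submultiplicative norm. For a general norm we would instead use that $O_1(\Ibf)$ is a proper subset of $\Ucal(n)$ on which the spectral logarithm is defined, and this is the point where the choice of norm matters. The argument needs two facts about the logarithm. First, it is continuous, indeed real-analytic, on the relevant set. Second, for unitary $\Ubf$ with spectrum avoiding $-1$, $\log\Ubf$ is anti-Hermitian. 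The second fact follows by diagonalising $\Ubf$ unitarily and taking principal arguments of the eigenvalues in $(-\pi,\pi)$. We then also have $e^{\log\Ubf}=\Ubf$.

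Given a target $\Ubf$, set $\Abf(\xbf):=\log\Ubf(\xbf)\in\ufrak(n)$. The map $\Abf$ is continuous on $\seg{0,1}^d$. Fix a real basis $\Hbf_1,\ldots,\Hbf_{n^2}$ of the real vector space $\ufrak(n)$, which has dimension $n^2$. For instance, take $i$ times the diagonal units, $E_{ab}-E_{ba}$ and $i(E_{ab}+E_{ba})$ for $a<b$. Expanding in this basis gives
\[
    \Abf(\xbf)=\sum_{k=1}^{n^2} f_k(\xbf)\Hbf_k,
\]
with continuous real coordinate functions $f_k$.

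Next apply Theorem~\ref{t:ka} once. It gives universal inner functions $\phi_1,\ldots,\phi_{2d+1}$ of the additive form $\phi_s(\xbf)=\sum_k\phi_{s,k}(x_k)$, which do not depend on $f$. To each $f_k$ it attaches a uniformly continuous outer $g_k$ with $f_k(\xbf)=\sum_{s=1}^{2d+1}g_k(\phi_s(\xbf))$. Substituting,
\[
    \Ubf(\xbf)=e^{\Abf(\xbf)}=e^{\sum_{k=1}^{n^2}\sum_{s=1}^{2d+1}g_k(\phi_s(\xbf))\Hbf_k}.
\]
This is the "moreover" form. To obtain the indexed form with $m=n^2(2d+1)$, enumerate the pairs $(k,s)$ as $j$. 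Attach to $j$ the operator $\Hbf_k$, the inner function $\phi_s$, and the outer function $g_k$. The inner data depend only on $n$ and $d$, as the statement requires.

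The main obstacle is the logarithm step: $\log$ must be well defined, continuous and anti-Hermitian-valued on all of $O_1(\Ibf)$, uniformly in the norm convention. We would handle it as follows. Suppose $-1$ were an eigenvalue of a unitary $\Ubf$ in $O_1(\Ibf)$ under the operator norm. Then $\norm{\Ubf-\Ibf}\geq 2$, a contradiction. So the spectrum stays in an arc avoiding $-1$. We would then prove continuity of the principal logarithm via the holomorphic functional calculus, using a contour around that arc that works locally uniformly. For other norms we would note that only the property "the image avoids eigenvalue $-1$" is used. Everything after this step is linear algebra plus a direct invocation of Theorem~\ref{t:ka}.
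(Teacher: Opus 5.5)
Your proposal is correct and follows the paper's proof essentially step by step. The principal logarithm series gives the continuous $\ufrak(n)$-valued lift (the paper's Lemma~\ref{l:lifting}); you then expand it in a fixed basis of $\ufrak(n)$ with $n^2$ elements, apply Theorem~\ref{t:ka} coordinatewise with the universal inner functions, and replicate operators, inner and outer functions to reach $m=n^2(2d+1)$. One minor caveat: your aside that properness $O_1(\Ibf)\subsetneq\Ucal(n)$ alone makes the logarithm available for an arbitrary norm is not justified. For example, for $n=2$ the norm $\max(\varepsilon\abs{a_{11}},\abs{a_{12}},\abs{a_{21}},\abs{a_{22}})$ with $\varepsilon<1/2$ gives a proper $O_1(\Ibf)$ that nevertheless contains $\mathrm{diag}(-1,1)$. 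Your operator-norm argument, which is the setting the paper's proof actually uses, is sound.
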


From the physical point of view, it would be more natural to split the exponent of the sum into the product of exponentials.
This would map the non-trivial entangled exponentials onto a sequential application of unitary evolutions with fixed operators \(\Hbf_j\) and configurable weights \(g_j(\phi_j(\xbf))\).
This naive splitting is blocked by the possible non-commutativity of the operators \(\Hbf_j\).
Therefore, we present the following factorised version of Theorem~\ref{t:qka-repr}.

\begin{theorem}[Unitary factorisation KA]\label{t:qka-fact}
    Let $n\geq 1$ be an integer, and let \(d>1\) be the dimension of the parameter space.
    Then there exist \(m=m(n,d)\) fixed ``inner'' anti-Hermitian operators \(\Hbf_1,\ldots,\Hbf_m\in\ufrak(n)\) and fixed ``inner'' continuous functions \(\phi_1,\ldots,\phi_m\colon [\,0,1\,]^{d}\to \R\) of the form
    \[
        \phi_j(\xbf) = \sum_{k=1}^d \phi_{j,k}(x_k),
    \]
    such that for any ``target'' continuous unitary-valued map \(\Ubf\colon[\,0,1\,]^d\to O_1(\Ibf)\subset\Ucal({n})\), there exist ``outer'' uniformly continuous functions $g_1,\ldots,g_m\colon\R\to\R$ such that
    \[
        \Ubf(\xbf) = \prod_{j=1}^m e^{g_j(\phi_j(\xbf))\Hbf_j}.
    \]
    Moreover, the product of the exponentials can be represented in the form
    \[
        \prod_{k=1}^M\prod_{s=1}^{2d+1} e^{g_k(\phi_s(\xbf))\Hbf_k}.
    \]
    where \(M=M(n) = \frac{m}{2d+1}\).
\end{theorem}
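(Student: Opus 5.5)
The plan is to reduce Theorem~\ref{t:qka-fact} to the scalar Theorem~\ref{t:ka}. The key observation is that exponentials of the \emph{same} operator commute:
\[
    \prod_{s=1}^{2d+1} e^{g_k(\phi_s(\xbf))\Hbf_k} = e^{\left(\sum_{s=1}^{2d+1} g_k(\phi_s(\xbf))\right)\Hbf_k}.
\]
So it suffices to find fixed operators \(\Hbf_1,\ldots,\Hbf_M\in\ufrak(n)\) and, for each target \(\Ubf\), \emph{continuous} real functions \(t_1,\ldots,t_M\colon\seg{0,1}^d\to\R\) with
\[
    \Ubf(\xbf)=\prod_{k=1}^M e^{t_k(\xbf)\Hbf_k}.
\]
Applying Theorem~\ref{t:ka} to each \(t_k\), with the common inner functions \(\phi_1,\ldots,\phi_{2d+1}\), gives uniformly continuous \(g_k\) with \(t_k(\xbf)=\sum_s g_k(\phi_s(\xbf))\). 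This yields the stated double-product form with \(m=M(2d+1)\).

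To obtain the functions \(t_k\), I would proceed in three steps.
\begin{enumerate}
    \item \emph{Continuous logarithm.} Take the principal logarithm \(\Abf(\xbf)=\log\Ubf(\xbf)\in\ufrak(n)\). On \(O_1(\Ibf)\) every eigenvalue \(\lambda\) of \(\Ubf\) satisfies \(\abs{\lambda-1}\leq\norm{\Ubf-\Ibf}<1\) (using a submultiplicative norm, or after comparison with one). Hence the spectrum stays in a closed arc avoiding \(-1\), and the holomorphic functional calculus makes \(\Ubf\mapsto\log\Ubf\) continuous there. In any case, \(\Abf\) takes values in the bounded set of anti-Hermitian matrices with operator norm at most \(\pi\).
    \item \emph{Second-kind coordinates.} Fix a basis \(\Hbf_1,\ldots,\Hbf_{n^2}\) of \(\ufrak(n)\) and put \(\psi(t)=\prod_{k=1}^{n^2}e^{t_k\Hbf_k}\). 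Its differential at \(0\) is \(t\mapsto\sum_k t_k\Hbf_k\), which is bijective. By the inverse function theorem, \(\psi\) restricts to a homeomorphism from a neighbourhood \(W\ni 0\) onto a neighbourhood \(V\ni\Ibf\) in \(\Ucal(n)\).
    \item \emph{Uniform root and factorisation.} Since \(\Abf\) is uniformly bounded, there is a single integer \(N\) such that \(e^{\Abf/N}\in V\) for every admissible \(\Abf\). Then
    \[
        \Ubf(\xbf)=\bigl(e^{\Abf(\xbf)/N}\bigr)^N=\prod_{r=1}^N\psi\bigl(\psi^{-1}(e^{\Abf(\xbf)/N})\bigr).
    \]
    This is a product of \(Nn^2\) exponentials of the basis operators, repeated \(N\) times in order. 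Their coefficients \(\psi^{-1}(e^{\Abf(\xbf)/N})_k\) are continuous in \(\xbf\), being compositions of continuous maps. This gives \(M(n)=Nn^2\), independent of \(d\) and of \(\Ubf\).
\end{enumerate}

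The step needing the most care is the uniformity in the first and third steps: \(N\), and hence \(M\), must not depend on the target \(\Ubf\), and the logarithm must be continuous on all of \(O_1(\Ibf)\). Both follow from the spectral confinement away from \(-1\) together with boundedness of \(\log\) on \(O_1(\Ibf)\); this is where the hypothesis \(O_1(\Ibf)\subsetneq\Ucal(n)\) enters. Without a submultiplicative norm, I would first check that \(\overline{O_1(\Ibf)}\) avoids unitaries with eigenvalue \(-1\), or else shrink to the operator norm, where the estimate is immediate. The remaining bookkeeping is routine: the \(g_k\) depend on \(\Ubf\) while the \(\phi_s\) and \(\Hbf_k\) are fixed, with the \(\Hbf\)-list formed by \(N\) copies of the basis.
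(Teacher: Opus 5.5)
Your proposal is correct and follows the paper's proof essentially step for step: the principal logarithm with a uniform bound (Lemma~\ref{l:lifting}), second-kind coordinates from the inverse function theorem combined with a uniform root $e^{\Abf/N}$ (Lemma~\ref{l:fact}), and then Theorem~\ref{t:ka} applied to each continuous coefficient, with the product over $s$ collapsing because exponentials of the same $\Hbf_k$ commute. One small correction to your closing remark: what the argument actually uses is $\norm{\Ubf-\Ibf}<1$ in a submultiplicative (e.g.\ spectral) norm, which is also what the paper's proof uses; the condition $O_1(\Ibf)\subsetneq\Ucal(n)$ alone does not keep spectra away from $-1$ for an arbitrary norm (a suitably weighted entrywise norm on $2\times 2$ matrices puts $-\Ibf$ inside a proper $O_1(\Ibf)$).
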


Proofs of both Theorem~\ref{t:qka-repr} and Theorem~\ref{t:qka-fact} heavily rely on Lemma~\ref{l:lifting} (Lifting), for which the usage of the neighbourhood \(O_1(\Ibf)\) is crucial.
Moreover, Theorem~\ref{t:qka-repr}, as a stronger version of Lemma~\ref{l:lifting}, does not hold for the whole group \(\Ucal(n)\) instead of \(O_1(\Ibf)\).
It is hard to tell whether Theorem~\ref{t:qka-fact} holds for \(\Ucal(n)\), but it probably does not.
See some discussion on this after the proof of Lemma~\ref{l:lifting}, and in Example~\ref{ex:bad-lift}.

\section{Proof of Theorem~\ref{t:qka-repr} (Unitary representation)}\label{s:proof-repr}

\begin{lemma}[Lifting]\label{l:lifting}
    Let $n\geq 1$ be an integer, and let \(d\geq 1\) be the dimension of the parameter space.
    Let \(\Ubf\colon[\,0,1\,]^d\to O_1(\Ibf)\subset\Ucal(n)\) be a continuous unitary-valued map.
    Then there exists a continuous anti-Hermitian-valued map \(\Abf\colon\seg{0,1}^d\to\ufrak(n)\), such that
    \[
        \Ubf(\xbf) = e^{\Abf(\xbf)}.
    \]
    Moreover, there exists a universal constant $C$ (independent of $\Ubf,n,d,\xbf$) such that \(\norm{\Abf(\xbf)}\leq C\).
\end{lemma}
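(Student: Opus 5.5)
We plan to define \(\Abf(\xbf)\) pointwise as the principal matrix logarithm of \(\Ubf(\xbf)\), given by the Mercator series
\[
    \Abf(\xbf) := \log \Ubf(\xbf) = \sum_{k=1}^{\infty} \frac{(-1)^{k+1}}{k}\,\bigl(\Ubf(\xbf)-\Ibf\bigr)^k ,
\]
and then verify three things: convergence and continuity, the identity \(e^{\Abf(\xbf)}=\Ubf(\xbf)\), and anti-Hermitianity together with the uniform bound.

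For convergence we want the spectral radius of \(\Ubf(\xbf)-\Ibf\) to be below \(1\). Since every submultiplicative (or operator) norm dominates the spectral radius, \(\norm{\Ubf-\Ibf}<1\) gives that every eigenvalue \(\lambda\) of \(\Ubf\) satisfies \(|\lambda-1|<1\). For a general (not necessarily submultiplicative) norm we would note that all norms on matrices are equivalent, but this does not preserve the constant \(1\). Following the paper's remark that only \(O_1(\Ibf)\subsetneq\Ucal(n)\) matters, we will work with the operator norm. A cleaner route is to use unitarity directly. \(\Ubf(\xbf)\) is normal, so \(\Ubf(\xbf)=\mathbf{V}\,\mathrm{diag}(e^{i\theta_1},\dots,e^{i\theta_n})\,\mathbf{V}^*\). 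The condition \(|e^{i\theta}-1|<1\) means \(|\theta|<\pi/3\) for \(\theta\in(-\pi,\pi]\). The principal logarithm is then \(\mathbf{V}\,\mathrm{diag}(i\theta_1,\dots,i\theta_n)\,\mathbf{V}^*\), which is anti-Hermitian. It coincides with the series above, because the power series of \(\log(1+z)\) equals the principal branch on \(|z|<1\). Exponentiating gives back \(\Ubf(\xbf)\).

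Continuity will come from the series. The map \(\Ubf\mapsto\sum_k \frac{(-1)^{k+1}}{k}(\Ubf-\Ibf)^k\) is continuous on \(\{\norm{\Ubf-\Ibf}<1\}\) by local uniform convergence: on any ball \(\norm{\Ubf-\Ibf}\le r<1\) the series is dominated by \(\sum r^k/k\). Compactness of \([0,1]^d\) gives some such \(r\) for the whole image, although continuity is local anyway. Composing with the continuous \(\Ubf\) yields a continuous \(\Abf\). Note that this avoids the spectral decomposition, which need not be continuous in \(\xbf\) when eigenvalues collide. This is the main point where care is needed, and it is why we use the analytic series rather than the eigenbasis to define \(\Abf\).

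For the bound, the operator norm of the normal matrix \(\Abf(\xbf)\) equals \(\max_j|\theta_j|<\pi/3\), so \(C=\pi/3\) works independently of \(\Ubf,n,d,\xbf\). For another fixed norm we would argue as follows. The bound \(\max|\theta_j|<\pi/3\) holds as soon as all eigenvalues lie within distance \(1\) of \(1\). The resulting constant depends only on the chosen norm through the norm-equivalence factor. We would state this dependence explicitly, consistent with the paper's remark that the norm choice affects only constants. The main obstacle is therefore not analytic but bookkeeping: making sure the hypothesis \(\norm{\Ubf-\Ibf}<1\) controls the spectrum for the intended norm.
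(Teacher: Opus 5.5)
Your proposal is correct and follows essentially the same route as the paper: you define $\Abf(\xbf)$ by the Mercator (principal logarithm) series, get continuity from uniform convergence over the compact cube, and use the spectral norm together with $\abs{e^{i\theta}-1}<1 \Rightarrow \abs{\theta}<\pi/3$ to obtain $C=\pi/3$. Your explicit unitary diagonalisation is a useful addition: it proves that $\Abf(\xbf)$ is anti-Hermitian and that $e^{\Abf(\xbf)}=\Ubf(\xbf)$, whereas the paper simply asserts the former and takes the latter from the cited reference.
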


\begin{proof}
    Since \(\norm{\Ubf(\xbf)-\Ibf} < 1\), by \cite[Theorem~11.5.1]{Bernstein2009matrix} we have the principal branch logarithm operator
    \[
        \Abf(\xbf) := \sum_{j=1}^\infty \frac{(-1)^{j+1}}{j}(\Ubf(\xbf)-\Ibf)^j.
    \]
    It is continuous since the series converges uniformly on a compact cube \(\seg{0,1}^d\).

    In order to specify $C$, we need to agree on our choice of the matrix norm.
    Let us use the spectral norm.
    Then, since \(\Abf(\xbf)\) is anti-Hermitian, for each of its eigenvalues \(\lambda\in i\R\) we have
    \[
        \abs{e^{\lambda}-1} < 1.
    \]
    Since the above logarithm series yields the principal branch, we obtain that \(\abs{\lambda}\leq\pi\), and so
    \[
        \abs{\lambda} < \frac{\pi}{3}.
    \]
    Hence,
    \[
        \norm{\Abf(x)}_2 < \frac{\pi}{3},
    \]
    and thus \(C=\frac{\pi}{3}\) for the spectral norm.
\end{proof}

Note that Lemma~\ref{l:lifting} does not hold for continuous maps \(\Ubf\colon\seg{0,1}^d\to\Ucal(n)\).
See Example~\ref{ex:bad-lift}, where it is explicitly shown for $n=2$ and for \(\mathcal{SU}\subset\Ucal\). There, the closed ball \(B^3\) is taken instead of \(\seg{0,1}^3\) just for simplicity, since they are homeomorphic.

\begin{proof}[Proof of Theorem~\ref{t:qka-repr}]
    Fix the basis \(\Fbf_1,\ldots,\Fbf_N \in \ufrak(n)\) of anti-Hermitian matrices, where \(N=n^2\).
    Applying Lemma~\ref{l:lifting} (Lifting), we obtain the continuous anti-Hermitian-valued map \(\Abf=\Abf(\xbf)\).
    Then
    \[
        \Abf(\xbf) = \sum_{k=1}^N f_k(\xbf) \Fbf_k,
    \]
    where \(f_1,\ldots,f_N\colon[\,0,1\,]^d\to\R\) are continuous functions.

    Now fix \(2d+1\) universal ``inner'' functions \(\phi_j\) from Theorem~\ref{t:ka} (Kolmogorov--Arnold representation). For each $k\in\{1,\ldots,N\}$ take a uniformly continuous function \(g_k\) such that
    \[
        f_k(\xbf) = \sum_{j=1}^{2d+1} g_{k}(\phi_j(\xbf)).
    \]
    It remains to replicate each \(\Fbf_k\) and \(g_k\) exactly \(2d+1\) times, and each \(\phi_j\) exactly $N$ times.
\end{proof}

\section{Proof of Theorem~\ref{t:qka-fact} (Unitary factorisation)}\label{s:proof-fact}

\begin{lemma}[Factorisation]\label{l:fact}
    Let $n\geq 1$ be an integer, and let \(d\geq1\) be the dimension of the parameter space.
    Then there exist \(M=M(n)\) anti-Hermitian operators \(\Gbf_1,\ldots,\Gbf_M \in\ufrak(n)\) such that
    for any continuous unitary-valued map \(\Ubf\colon[\,0,1\,]^d\to O_1(\Ibf)\subset\Ucal(n)\),
    there exist continuous functions \(\theta_1,\ldots,\theta_M\colon[\,0,1\,]^d\to\R\) such that
    \[
        \Ubf(\xbf) = \prod_{j=1}^Me^{\theta_j(\xbf)\Gbf_j}.
    \]
\end{lemma}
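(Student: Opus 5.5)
The plan is to combine Lemma~\ref{l:lifting} (Lifting) with a local inverse function argument for the ``product of exponentials'' map, and then use a power trick to pass from a small neighbourhood of \(\Ibf\) to all of \(O_1(\Ibf)\).

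\emph{Step 1: the local chart.} Fix a basis \(\Fbf_1,\ldots,\Fbf_N\) of \(\ufrak(n)\), \(N=n^2\), and consider the smooth map
\[
    \Phi\colon\R^N\to\Ucal(n),\qquad \Phi(t_1,\ldots,t_N)=e^{t_1\Fbf_1}e^{t_2\Fbf_2}\cdots e^{t_N\Fbf_N}.
\]
Its differential at \(0\) is \((t_1,\ldots,t_N)\mapsto\sum_k t_k\Fbf_k\), which is a linear isomorphism \(\R^N\to\ufrak(n)=T_{\Ibf}\Ucal(n)\). By the inverse function theorem there are a neighbourhood \(V\) of \(0\in\R^N\) and a neighbourhood \(W\) of \(\Ibf\) in \(\Ucal(n)\) such that \(\Phi\colon V\to W\) is a diffeomorphism. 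Composing with the exponential chart, we get \(\varepsilon=\varepsilon(n)>0\) such that for every \(\Abf\in\ufrak(n)\) with \(\norm{\Abf}<\varepsilon\) we have \(e^{\Abf}\in W\). Moreover, \(\Psi:=(\Phi|_V)^{-1}\circ\exp\) is a continuous map from the \(\varepsilon\)-ball of \(\ufrak(n)\) to \(\R^N\). Everything here depends only on \(n\) (and the fixed basis and norm), not on \(\Ubf\) or \(d\).

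\emph{Step 2: the power trick.} Given \(\Ubf\), Lemma~\ref{l:lifting} gives a continuous \(\Abf\colon\seg{0,1}^d\to\ufrak(n)\) with \(\Ubf=e^{\Abf}\) and \(\norm{\Abf(\xbf)}\leq C=\pi/3\). Choose an integer \(K=K(n)\) with \(C/K<\varepsilon\). Since \(\Abf/K\) commutes with itself,
\[
    \Ubf(\xbf)=\bigl(e^{\Abf(\xbf)/K}\bigr)^K .
\]
Set \((t_1(\xbf),\ldots,t_N(\xbf)):=\Psi(\Abf(\xbf)/K)\); these are continuous functions, being a composition of continuous maps, and
\[
    e^{\Abf(\xbf)/K}=\prod_{k=1}^N e^{t_k(\xbf)\Fbf_k}.
\]
Concatenating \(K\) copies gives the required form with \(M=NK\), generators \(\Gbf\) equal to \(\Fbf_1,\ldots,\Fbf_N\) repeated \(K\) times, and \(\theta_j\) the correspondingly repeated \(t_k\).

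\emph{Main obstacle.} The delicate point is uniformity. The neighbourhood from the inverse function theorem must be fixed once, independently of \(\Ubf\), so that \(M\) and the \(\Gbf_j\) are universal. This is exactly where the universal bound \(C\) of Lemma~\ref{l:lifting} is used: it allows a single choice of \(K\) for all targets. The one remaining thing to check is that \(\varepsilon\) can be taken to be a genuine ball radius in \(\ufrak(n)\), which holds because \(\exp\) is continuous at \(0\) and \(W\) is open.
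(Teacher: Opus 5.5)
Your proposal is correct and follows essentially the same route as the paper: the inverse function theorem for the product-of-exponentials map at \(\zero\), the uniform bound \(C\) from Lemma~\ref{l:lifting} to fix a single \(K\), and the \(K\)-th power trick, giving \(M=Kn^2\). Your explicit radius \(\varepsilon\) and the map \(\Psi=(\Phi|_V)^{-1}\circ\exp\) only make precise the continuity of the \(\theta_j\), which the paper leaves implicit (it states only that \(F\colon V\to W\) is bijective).
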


\begin{proof}
    Fix the basis \(\Fbf_1,\ldots,\Fbf_N \in \ufrak(n)\) of anti-Hermitian matrices, where \(N=n^2\).
    Define the function \(F\colon\R^N\to \Ucal(n)\) by
    \[
        F(\theta_1,\ldots,\theta_N) := \prod_{s=1}^Ne^{\Fbf_s\theta_s}.
    \]
    Its differential \(dF_\zero\colon T_\zero\R^N\to T_\Ibf\Ucal(n)\cong\ufrak(n)\) equals
    \[
        dF_\zero(\theta_1,\ldots,\theta_N) = \sum_{s=1}^N \theta_s\Fbf_s.
    \]
    Since \(\Fbf_1,\ldots,\Fbf_N\) form a basis, the differential is a linear isomorphism.
    Hence, by the Inverse Function Theorem (see, for example, \cite[Theorem~C.34]{Lee2012Smooth}), there exist open neighbourhoods \(V\supset \zero\in\R^N\) and \(W\supset\Ibf\in\Ucal(n)\) such that \(F\colon V\to W\) is bijective.
        
    Apply Lemma~\ref{l:lifting} (Lifting) in order to obtain a continuous parameterisation \(\Abf(\xbf)\) such that
    \[
        e^{\Abf(\xbf)} = \Ubf(\xbf),
        \qquad \norm{\Abf(\xbf)} \leq C.
    \]
    Choose a sufficiently large \(K\in\N\) such that \(e^{\Abf(\xbf)/K}\in W\) for all \(\xbf\in\seg{0,1}^d\).
    Note that the choice of $K$ depends only on the choice of basis vectors \(\Fbf_s\) and the universal constant $C$.
    Now we have $N$ continuous functions \(\theta_1,\ldots,\theta_N\colon\seg{0,1}^d\to\R\) such that
    \[
        e^{\Abf(\xbf)/K} = \prod_{s=1}^Ne^{\theta_s(\xbf)\Fbf_s}.
    \]
    Since
    \[
        \Ubf(\xbf) = \left(e^{\Abf(\xbf)/K}\right)^K,
    \]
    it remains to replicate functions \(\theta_s\) and \(\Fbf_s\) exactly $K$ times, in order to obtain the required representation for $M=KN$ exponentials.
\end{proof}

\begin{proof}[Proof of Theorem~\ref{t:qka-fact}]
    Fix anti-Hermitian matrices \(\Gbf_1,\ldots,\Gbf_M \in \ufrak(n)\) from Lemma~\ref{l:fact} (Factorisation).
    Then for any \(\Ubf\colon\seg{0,1}^d\to O_1(\Ibf)\subset\Ucal(n)\) we have the representation
    \[
        \Ubf(\xbf) = \prod_{k=1}^Me^{\theta_k(\xbf)\Gbf_k},
    \]
    where \(\theta_k\colon\seg{0,1}^d\to\R\) are continuous functions.
    Now fix \(2d+1\) universal ``inner'' functions \(\phi_j\) from Theorem~\ref{t:ka} (Kolmogorov--Arnold representation).
    For each $k\in\{1,\ldots,M\}$, take a uniformly continuous function \(g_k\) such that
    \[
        \theta_k(\xbf) = \sum_{j=1}^{2d+1} g_{k}(\phi_j(\xbf)).
    \]
    It remains to set \(m=(2d+1)M\) in order to replicate each \(\Gbf_k\) and \(g_k\) exactly \(2d+1\) times, and each \(\phi_j\) exactly $M$ times.
\end{proof}

\section{Conclusions and discussion}\label{s:conclusion}

In this work, we have successfully established two distinct quantum analogues of the KA representation theorem tailored for continuous unitary-valued maps within the local neighbourhood \(O_1(\Ibf) \subset \Ucal(n)\).
Theorem~\ref{t:qka-repr} provides an exact additive representation inside the matrix exponent, effectively compressing multivariate unitary evolutions into a finite sum of univariate fields over fixed anti-Hermitian bases.
Conversely, Theorem~\ref{t:qka-fact} offers a factorised sequential product formulation, which directly aligns with the operational paradigm of quantum gate synthesis and multi-layer quantum circuits.
The exposed results have some further developments that are interesting from both mathematical and physical points of view.

Probably, as a theoretical refinement, one can apply the classical proof of KA (for example, from \cite{Lorentz-Golitschek-Makovoz}) to Theorem~\ref{t:qka-repr} and Theorem~\ref{t:qka-fact}, in order to obtain analogous results for \(g_j=g\) being the same.
Physically, that would mean that we need to prepare only one ``outer'' function $g$.

The ideas of the constructive proof of the Kolmogorov--Arnold representation theorem were given in \cite{Sprecher96, Sprecher97}, and completed in \cite{Braun-Griebel-2009}.
Presumably, leveraging their construction of the ``inner'' functions \(\phi_j\), it is possible to \emph{constructively} prove Theorem~\ref{t:qka-repr} and Theorem~\ref{t:qka-fact}. This may provide an approximative algorithm for the reproduction of unitary evolutions by sequential applications of exponentials of fixed anti-Hermitian operators.

Furthermore, it is worth noting that while the classical KA representation often yields highly non-smooth outer functions, in 2021 Schmidt-Hieber \cite{SchmidtHieber2021} demonstrated that modifications of the theorem can successfully transfer the smoothness properties of the target function to the outer layer, allowing efficient approximations via deep ReLU networks.
Investigating such smoothness transfers in the quantum domain remains an open challenge.
On the other hand, in \cite{Ismailov-2026}, the analogues of KA were proved for discontinuous functions.
Presumably, using these results, it is possible to obtain the analogues of theorems~\ref{t:qka-repr} and~\ref{t:qka-fact} for discontinuous unitary-valued maps.

In \cite{DzhenzherFreedman-25}, the stability of KA was established under reparameterisations of the ``inner'' layer of \(\phi_j\).
It was shown there, that if we know a countable family of adversarial homeomorphisms \(\R^{2d+1}\to\R^{2d+1}\) acting on the ``inner'' functions,
then we can adjust the ``outer'' function $g$ so that the adversarial reparameterisation will not compromise the ``target'' function.
More formally, the theorem is as follows.

\begin{theorem}[Dzhenzher, Freedman; 2025]
    Let $d>1$ be an integer.
    Let \(\mathcal{H}\) be a countable set of homeomorphisms \(\R^{2d+1}\to\R^{2d+1}\).
    Let \(\gamma_1,\ldots,\gamma_d\in\R\) be rationally independent.
    Then, there exists a fixed ``inner'' tuple \(\phi\colon \seg{0,1}\to \seg{0,1}^{2d+1}\) such that for any ``target'' continuous function \(f\colon \seg{0,1}^d\to\R\) and any adversarial homeomorphism \(h\in\mathcal{H}\),
    there exists a uniformly continuous function \(g\colon\R\to\R\) such that for any \(\xbf\in\seg{0,1}^d\),
    \[
        f(\xbf) = \sum_{j=1}^{2d+1} g\Bigl(\gamma_1h_j\bigl(\phi(x_1)\bigr) +\ldots+ \gamma_dh_j\bigl(\phi(x_d)\bigr)\Bigr).
    \]
\end{theorem}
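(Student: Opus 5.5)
This theorem is the Dzhenzher--Freedman stability result quoted at the very end, with countably many adversarial homeomorphisms $h\in\Hcal$ acting on the inner tuple. I would follow the Kahane--Baire category approach to Theorem~\ref{t:ka} in Sprecher's form, with a single $\phi$ and fixed weights $\gamma_k$. Consider the complete metric space $\Phi$ of continuous maps $\phi\colon\seg{0,1}\to\seg{0,1}^{2d+1}$ with the sup norm; one can restrict to non-decreasing coordinates, which is a closed subset. Fix $h\in\Hcal$ and write
\[
\psi^h_j(\xbf)=\sum_{k=1}^d \gamma_k h_j(\phi(x_k)).
\]
The first aim is to show that the set $\Phi_h$ of those $\phi$ for which the map $(\psi^h_1,\ldots,\psi^h_{2d+1})$ is a ``basic embedding'' is a dense $G_\delta$ in $\Phi$. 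Here ``basic embedding'' means that every continuous $f$ has the form $\sum_j g(\psi^h_j)$. Given this, Baire's theorem makes $\bigcap_{h\in\Hcal}\Phi_h$ dense, hence nonempty, since $\Hcal$ is countable. Any $\phi$ in the intersection works for all $h$ simultaneously.

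To prove that $\Phi_h$ is residual, I would use the standard reduction. Fix a countable dense set $\{f_i\}$ of the unit ball of $C(\seg{0,1}^d)$. Let $U_{i,h}$ be the set of $\phi$ for which some continuous $g$ with $\norm{g}\leq 1$ satisfies
\[
\Bigl\|f_i-\sum_j g\circ\psi^h_j\Bigr\| < 1-\varepsilon,
\]
for a fixed $\varepsilon>0$ (the usual constant is $1/(2d+2)$-type). This set is open, since the condition is strict and $\phi\mapsto\psi^h_j$ is continuous: $h$ is uniformly continuous on compacts. The classical iteration argument then shows that if $\phi\in\bigcap_i U_{i,h}$, every $f$ is represented by a uniformly convergent series of $g$'s. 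This is a geometric-series argument, and it gives a uniformly continuous $g$ on the bounded range of the $\psi^h_j$, which is then extended to $\R$.

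The key step is density of $U_{i,h}$, and here $h$ enters. Given $\phi_0$ and $\delta>0$, perturb to a $\phi$ that is piecewise constant on a fine grid of intervals in $\seg{0,1}$, with small gaps, and whose values are in general position. The requirement is that each cube of the induced grid on $\seg{0,1}^d$ is disjoint from at most $d$ of the $2d+1$ ``bad'' families, and that the values $\psi^h_j$ on distinct cubes are distinct. The last condition is what lets one define $g$ separately on each value.

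Because $h$ is a homeomorphism, one may choose the values $y=h(\phi(I))$ freely in a small neighbourhood of $h(\phi_0(I))$ and pull them back via $h^{-1}$. Continuity of $h^{-1}$ keeps $\phi$ within $\delta$ of $\phi_0$. Rational independence of $\gamma_1,\ldots,\gamma_d$ guarantees that for generic choices of the coordinates $y_j$ the sums $\sum_k\gamma_k y_j(I_k)$ are pairwise distinct over distinct multi-indices. This holds because the bad set is a finite union of proper hyperplanes in the $y$-parameters: a nontrivial relation forces coefficients $\gamma_k-\gamma_{k'}$ or $\gamma_k$ to vanish, which is impossible. The shifted-grid combinatorics, with the $j$-th grid offset so that each point lies in the gaps of at most $d$ grids, must also be arranged after applying $h$. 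I would build the gaps in the $h$-image coordinates and pull back.

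The main obstacle is that $h$ mixes the $2d+1$ coordinates. A gap for coordinate $j$ in the image therefore does not correspond to a simple interval structure in $\phi$. I would first try to handle this by working entirely in $y=h\circ\phi$ space, with piecewise-constant $y$ plus short connecting segments for continuity. The counting argument only needs that, for each $x$, at most $d$ coordinates of $y(x)$ are ``in transition.'' Transitions of different coordinates can be staggered in time regardless of $h$, and $\phi=h^{-1}\circ y$ is then automatically continuous and $\delta$-close to $\phi_0$.
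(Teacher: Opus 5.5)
The paper does not prove this statement. It is quoted from \cite{DzhenzherFreedman-25} as background for Theorem~\ref{t:qka-stability}, so there is no in-paper argument to compare yours with. On its own terms, your Kahane-style Baire strategy is the right one and can be completed. What you call the main obstacle is not really one: all the combinatorics (plateaus, transition windows, general position) happens coordinatewise in $y=h\circ\phi$, and $h^{-1}$ only has to transport $\delta$-closeness, which is what you end up doing anyway. Several points still need fixing.

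\emph{The counting requirement is misstated.} The contraction estimate needs every scalar $x\in\seg{0,1}$ to lie in the transition window of at most \emph{one} of the $2d+1$ coordinates $y_j$. Then every $\xbf\in\seg{0,1}^d$ is bad for at most $d$ indices $j$, so at least $d+1$ are good. Setting $g$ equal to $\frac{1}{d+1}$ times the value of $f$ at the centre of the corresponding cube gives $\norm{f-\sum_j g\circ\psi^h_j}\leq(\frac{d}{d+1}+\epsilon)\norm{f}$. Your condition, ``at most $d$ coordinates of $y(x)$ in transition'', is too weak. It lets the $d$ variables $x_1,\ldots,x_d$ spoil up to $d^2$ indices (capped at $2d+1$), which for $d\geq2$ leaves fewer than $d+1$ good ones, and the estimate collapses. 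Staggering the transitions in pairwise disjoint windows does give the correct condition, so only the statement is wrong; your sentence about cubes being ``disjoint from at most $d$ bad families'' should be corrected the same way.

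\emph{Genericity step.} The coefficient of a plateau value $a_{j,r}$ in the difference of two multi-indices is the signed subset sum $\sum_{k:r_k=r}\gamma_k-\sum_{k:r'_k=r}\gamma_k$, not just $\gamma_k$ or $\gamma_k-\gamma_{k'}$. Rational independence still makes at least one of these nonzero. You also need distinctness across different $j$, because a single $g$ serves all indices. This holds generically, since different $j$ use disjoint sets of plateau variables.

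\emph{Two smaller fixes.} Drop the restriction to non-decreasing coordinates, because $h^{-1}\circ y$ is not monotone and would leave your space. Also make sure $y$ stays inside $h(\seg{0,1}^{2d+1})$, which can fail if $\phi_0$ touches the boundary of the cube. Shrinking $\phi_0$ into $\seg{\epsilon,1-\epsilon}^{2d+1}$ first fixes this: $h\circ\phi_0$ then keeps a uniform margin inside the open set $h((0,1)^{2d+1})$.

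A cleaner packaging avoids both redoing the density construction for every $h$ and the boundary bookkeeping.
\begin{enumerate}
  \item Prove once, by the classical argument with no $h$ and no monotonicity, that the basic tuples form a residual set $R$ in the Banach space $C(\seg{0,1},\R^{2d+1})$.
  \item Each map $\phi\mapsto h\circ\phi$ is a self-homeomorphism of this space, so $\{\phi: h\circ\phi\in R\}$ is residual.
  \item Since $\Hcal$ is countable, $\bigcap_{h\in\Hcal}\{\phi: h\circ\phi\in R\}$ is residual, hence dense.
  \item It therefore meets the nonempty open set $C(\seg{0,1},(0,1)^{2d+1})$, and any $\phi$ in the intersection works.
\end{enumerate}
In this form $h$ enters only through the fact that homeomorphisms preserve residuality, and the combinatorial core is exactly Kahane's.
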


For \(\Hcal=\{\id\}\), this theorem is just the classical KA.
It is clear, based on the proofs above, that analogous results hold for adversarial reparameterisations, acting in Theorem~\ref{t:qka-repr} and Theorem~\ref{t:qka-fact}.
We give below an example of the reformulation of Theorem~\ref{t:qka-fact} (Unitary factorisation).

\begin{theorem}\label{t:qka-stability}
    Let $n\geq 1$ be an integer, and let \(d>1\) be the dimension of the parameter space.
    Let \(\mathcal{H}\) be a countable set of homeomorphisms \(\R^{2d+1}\to\R^{2d+1}\).
    Let \(\gamma_1,\ldots,\gamma_d\in\R\) be rationally independent.
    Then there exist \(M=M(n)\) fixed ``inner'' anti-Hermitian operators \(\Gbf_1,\ldots,\Gbf_M\in\ufrak(n)\) and a fixed ``inner'' continuous tuple \(\phi\colon \seg{0,1}\to \seg{0,1}^{2d+1}\) such that for any ``target'' continuous unitary-valued map \(\Ubf\colon[\,0,1\,]^d\to O_1(\Ibf)\subset\Ucal({n})\) and any adversarial homeomorphism \(h\in\mathcal{H}\),
    there exist uniformly continuous functions \(g_1,\ldots,g_M\colon\R\to\R\) such that for any \(\xbf\in\seg{0,1}^d\),
    \[
        \Ubf(\xbf) = \prod_{k=1}^M\prod_{j=1}^{2d+1} e^{g_k\Bigl(\gamma_1h_j\bigl(\phi(x_1)\bigr) +\ldots+ \gamma_dh_j\bigl(\phi(x_d)\bigr)\Bigr)\Gbf_k}.
    \]
\end{theorem}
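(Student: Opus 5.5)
The proof follows the proof of Theorem~\ref{t:qka-fact}, with Theorem~\ref{t:ka} replaced by the Dzhenzher--Freedman theorem. First apply Lemma~\ref{l:fact} (Factorisation). It gives $M=M(n)$ fixed anti-Hermitian operators $\Gbf_1,\ldots,\Gbf_M\in\ufrak(n)$, which depend only on $n$ (through the chosen basis $\Fbf_s$, the inverse-function neighbourhood $W$, and the universal constant $C$ of Lemma~\ref{l:lifting}). For every continuous $\Ubf\colon\seg{0,1}^d\to O_1(\Ibf)$ they yield continuous functions $\theta_1,\ldots,\theta_M\colon\seg{0,1}^d\to\R$ with
\[
    \Ubf(\xbf)=\prod_{k=1}^M e^{\theta_k(\xbf)\Gbf_k}.
\]
The operators $\Gbf_k$ are fixed before $\Ubf$ is chosen, so the ``inner'' operator data is universal, as the statement requires.

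Next, fix the countable family $\Hcal$ and the rationally independent $\gamma_1,\ldots,\gamma_d$. Take the single fixed inner tuple $\phi\colon\seg{0,1}\to\seg{0,1}^{2d+1}$ given by the Dzhenzher--Freedman theorem. It depends only on $d$, $\Hcal$ and the $\gamma$'s, and not on any target function. Given $\Ubf$ and $h\in\Hcal$, apply that theorem to each target $f=\theta_k$ separately, with the same $h$. This gives uniformly continuous $g_k\colon\R\to\R$ such that
\[
    \theta_k(\xbf)=\sum_{j=1}^{2d+1} g_k\Bigl(\gamma_1h_j\bigl(\phi(x_1)\bigr)+\ldots+\gamma_dh_j\bigl(\phi(x_d)\bigr)\Bigr).
\]

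Finally, substitute these sums into the exponents. For a fixed $k$, all summands multiply the same matrix $\Gbf_k$, so they are scalar multiples of one matrix and commute. Hence $e^{\theta_k(\xbf)\Gbf_k}=\prod_{j=1}^{2d+1}e^{g_k(\cdots)\Gbf_k}$, where the inner product over $j$ may be taken in any order. Inserting this into the ordered product over $k$ gives exactly the displayed formula of Theorem~\ref{t:qka-stability}.

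There is no serious analytic obstacle, but one point needs care: the order of quantifiers. The tuple $\phi$ must be chosen once, before both $\Ubf$ and $h$. This is legitimate because the Dzhenzher--Freedman theorem already produces a $\phi$ that works for all targets and all $h\in\Hcal$ simultaneously. We apply it to the $M$ targets $\theta_k$, which are constructed only after $\Ubf$ is given. Only the outer functions $g_k$ are allowed to depend on $\Ubf$ and $h$, and in this construction that is the only place where such dependence enters.
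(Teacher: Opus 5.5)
Your proposal is correct and follows the route the paper intends. The paper gives no separate proof; it only remarks that the proof of Theorem~\ref{t:qka-fact} (Lemma~\ref{l:fact}, then a KA decomposition of each $\theta_k$) goes through with the Dzhenzher--Freedman theorem in place of Theorem~\ref{t:ka}, which is what you carry out, including the correct quantifier order for $\phi$ (chosen before $\Ubf$ and $h$) and the commutation of the factors $e^{g_k(\cdot)\Gbf_k}$ for fixed $k$.
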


It is natural to explore Theorem~\ref{t:qka-repr} and Theorem~\ref{t:qka-fact} on the stability under adversarial perturbations acting on the fixed ``inner'' anti-Hermitian operators.
Probably, using the classical technique of KA, one can prove such analogous results.
A similar result was established in \cite{Dzhenzher26-alg-QKA}, where the stability under adversarial attacks acting both on the fixed ``inner'' observables and on the target observable was shown.
From a practical perspective, our factorisation results (Theorem~\ref{t:qka-fact} and Theorem~\ref{t:qka-stability}) suggest a systematic pipeline for compiling complex, parameter-dependent quantum protocols into sequential single-parameter rotations.
This structural decomposition can significantly simplify the optimisation landscape for variational quantum algorithms, as the training process can be distributed across decoupled univariate spline-based edge activations rather than highly non-linear multivariate cost functions.
Future research may focus on quantifying the exact approximation errors when transitioning from these continuous representations to discrete, finite-depth quantum circuits on noisy intermediate-scale quantum (NISQ) hardware.

While these structural advantages highlight the potential of our framework for scalable quantum network architectures, it is crucial to establish the precise mathematical boundaries of the proven representation theorems.
Specifically, both Theorem~\ref{t:qka-repr} and Theorem~\ref{t:qka-fact} strictly rely on the local geometric structure of the $1$-neighbourhood \(O_1(\mathbf{I})\). A natural question arises as to whether these results can be globally extended to the entire unitary group \(\Ucal(n)\).
To demonstrate that such a global generalisation encounters fundamental topological obstructions, we conclude this paper with a concrete counterexample involving the lifting property on compact domains.
Intriguingly, we have not succeeded in finding the reference for this counterexample in the topology literature, although the example is simple and rather educational.
Perhaps it is considered topological folklore.

\begin{example}\label{ex:bad-lift}
    Here we show that the exponential map \(\exp\colon\mathfrak{su}(2)\to\mathcal{SU}(2)\) does not satisfy the lifting property for a closed ball \(B^3 = \{\vec{x}\in\R^3:\abs{\vec{x}}\leq 1\}\).
    % This example is from a private conversation with Andrey Ryabichev.
    See an overview of smooth manifolds, Lie groups and liftings, for example, in \cite{Hatcher2001AT, Lee2012Smooth}.

    Identify the Lie algebra \(\mathfrak{su}(2)\) with \(\R^3\), and the Lie group \(\mathcal{SU}(2)\) with the unit sphere \(S^3\subset\R^4\).
    The identity matrix \(\Ibf\in\mathcal{SU}(2)\) corresponds to the point \((1,0,0,0)\in S^3\).
    The exponential map is given by
    \[
        \exp\colon \mathfrak{su}(2) \cong \R^3\ni\vec{x} \mapsto \left( \cos \abs{\vec{x}}, \frac{\vec{x}}{\abs{\vec{x}}} \sin \abs{\vec{x}} \right) \in S^3 \cong \mathcal{SU}(2),
        \qquad \exp(0,0,0) = (1, 0, 0, 0) = \Ibf.
    \]
    It is clear that the pre-image \(\exp^{-1}(-\Ibf)\) is equal to the union of spheres
    \[
         S^2_k = \{ \vec{x} \in \R^3 \,:\, \cos\abs{\vec{x}}=-1,\,\sin\abs{\vec{x}}=0 \} = \{ \vec{x} \in \R^3 \,:\, \abs{\vec{x}} = (2k+1)\pi \} \subset \R^3
    \]
    for \(k\in\N_0\).

    Define the smooth map \(f\colon B^3\to S^3\) by
    \[
        f(\vec x) := -\exp(-\pi\vec{x}).
    \]
    This map sends the origin to \(-\Ibf\), and sends the entire boundary \(\partial B^3\) to \(\Ibf\).
    Moreover, $f$ is injective on the interior \(\interior B^3 = B^3\setminus \partial B^3\) of $B^3$.
    Indeed, if \(f(\vec{x})=f(\vec{y})\) for some \(\vec{x},\vec{y}\in \interior B^3\), then \(\cos{\pi\abs{\vec{x}}} = \cos{\pi\abs{\vec{y}}}\).
    This implies \(\abs{\vec{x}}=\abs{\vec{y}}\).
    Also,
    \[
        \frac{\vec{x}}{\abs{\vec{x}}} \sin \abs{\vec{x}} =
        \frac{\vec{y}}{\abs{\vec{y}}} \sin \abs{\vec{y}},
    \]
    which together with \(\abs{\vec{x}}=\abs{\vec{y}}\) implies that \(\vec{x}=\vec{y}\).
    So, $f$ indeed is injective on \(\interior B^3\).
    
    In this paragraph, we show that $f$ admits no continuous lift.
    On the contrary, suppose that there exists a continuous lift \(\tilde{f}\colon B^3\to\mathfrak{su}(2)\cong\R^3\) such that \(\exp\circ\tilde{f}=f\).
    Since \(\exp(\tilde{f}(0,0,0))=f(0,0,0)=-\Ibf\), it follows that \(\tilde{f}(0,0,0)\) lies on some sphere $S^2_k$.
    Moreover, since $f$ is injective on $\interior B^3$, it follows that the lift \(\tilde{f}\) is also injective on \(\interior B^3\).
    Hence \(\tilde{f}(\interior B^3)\) is open, by the Brouwer Invariance of Domain theorem (see, for example, \cite[Theorem~2B.3]{Hatcher2001AT}).
    Therefore, there exists \(\vec{x}\in \interior B^3\setminus\{(0,0,0)\}\) such that \(\tilde{f}(\vec{x})\in S^2_k\).
    For this point,
    \[
        f(\vec{x}) = \exp(\tilde{f}(\vec{x})) = -\Ibf.
    \]
    Hence, $\vec{x}$ must be the origin.
    This leads to the contradiction.
\end{example}

\section*{Data availability}

The research has no associated data.

\section*{Conflict of interest}

The author confirms that he has no conflicts of interest.

\section*{Acknowledgements}

The author is grateful to Andrey Ryabichev for suggesting the core idea for the counterexample presented in Example~\ref{ex:bad-lift}.

\printbibliography
% % \printbibitembibliography

\end{document}